\documentclass{article}
\usepackage{amssymb, amsmath, amsthm}

\usepackage[english]{babel}
\usepackage[utf8]{inputenc}
\usepackage{textcomp}
\usepackage{hyperref}
\usepackage[T1]{fontenc}
\usepackage{comment}
\usepackage{todonotes}
\usepackage[capitalise]{cleveref}
\usepackage{fullpage} 
\usepackage{mathtools}
\usepackage{caption}
\usepackage{multirow}
\usepackage{wrapfig}

\title{No Repetition: Fast Streaming with Highly Concentrated Hashing}
\author{Anders Aamand\footnote{Basic Algorithms Research Copenhagen
    (BARC), University of Copenhagen.} \and Debarati Das$^*$ \and
  Evangelos Kipouridis$^*$ \and Jakob B.\ T.\ Knudsen$^*$ \and Peter
  M. R. Rasmussen$^*$ \and Mikkel Thorup$^*$}

\date{\today}

\newcommand{\abs}[1]{\left\vert #1\right\vert}

\newcommand{\eps}{\varepsilon}

\newcommand{\Z}{\mathbb{Z}}

\newcommand{\C}{\mathcal C}

\newcommand{\PR}[1]{\Pr\left[ #1 \right]}

\newcommand\drop[1]{}

\newcommand{\req}[1]{\eqref{#1}}

\newcommand{\cE}{{\mathcal E}}
\newcommand{\E}[1]{\mathbb E\left[#1\right]}

\newcommand{\Var}[1]{\mathrm{Var}\left[ #1 \right]}

\def\cC{\mathcal{C}}
\def\deltaerr{\varepsilon}

\makeatletter
\newtheorem*{rep@theorem}{\rep@title}
\newcommand{\newreptheorem}[2]{%
\newenvironment{rep#1}[1]{%
 \def\rep@title{#2 \ref{##1}}%
 \begin{rep@theorem}}%
 {\end{rep@theorem}}}
\makeatother

\newtheorem{definition}{Definition}
\newtheorem{theorem}{Theorem}
\newtheorem{lemma}[theorem]{Lemma}
\newreptheorem{lemma}{Lemma}

\sloppy
\begin{document}
\maketitle\begin{abstract} To get estimators that work within a
certain error bound with high probability, a common strategy is to
design one that works with constant probability, and then boost the
probability using independent repetitions.  Important examples of this
approach are small space algorithms for estimating the number of
distinct elements in a stream, or estimating the set similarity
between large sets.  Using standard strongly universal hashing to
process each element, we get a sketch based estimator where the
probability of a too large error is, say, 1/4. By performing $r$
independent repetitions and taking the median of the estimators, the
error probability falls exponentially in $r$. However, running $r$
independent experiments increases the processing time by a factor $r$.

Here we make the point that if we have a hash function with strong
concentration bounds, then we get the same high probability bounds
without any need for repetitions.  Instead of $r$ independent
sketches, we have a single sketch that is $r$ times bigger, so the
total space is the same. However, we only apply a single hash function, so
we save a factor $r$ in time, and the overall algorithms just get simpler.

Fast practical hash functions with strong concentration bounds were
recently proposed by Aamand {\em et al.} (to appear in {\em STOC
  2020}). Using their hashing schemes, the algorithms thus
become very fast and practical, suitable for online processing of high
volume data streams.
\end{abstract}

\newpage

\section{Introduction}
To get estimators that work within a certain error bound with high
probability, a common strategy is to design one that works with
constant probability, and then boost the probability using
independent repetitions.  A classic example of this approach is the
algorithm of Bar-Yossef {\em et al.}~\cite{BJKST02} to estimate the number
of distinct elements in a stream. Using standard strongly universal
hashing to process each element, we get an estimator where the
probability of a too large error is, say, 1/4. By performing $r$
independent repetitions and taking the median of the estimators, the
error probability falls exponentially in $r$. However, running $r$
independent experiments increases the processing time by a factor $r$.

Here we make the point that if we have a hash function with strong
concentration bounds, then we get the same high probability bounds
without any need for repetitions.  Instead of $r$ independent
sketches, we have a single sketch that is $\Theta(r)$ times bigger, so the
total space is essentially the same. However, we only apply a single hash
function, processing each element in constant time regardless of $r$,
and the overall algorithms just get simpler.

Fast practical hash functions with strong concentration bounds were
recently proposed by Aamand {\em et al.}~\cite{AKKRT20}.
Using their hashing schemes, we get a very fast implementation
of the above streaming algorithm, suitable for online processing of high
volume data streams.

To illustrate a streaming scenario where the constant in the
processing time is critical, consider the Internet.  Suppose we want
to process packets passing through a high-end Internet router. Each
application only gets very limited time to look at the packet before
it is forwarded.  If it is not done in time, the information is
lost. Since processors and routers use some of the same technology, we
never expect to have more than a few instructions available. Slowing
down the Internet is typically not an option.  The papers of
Krishnamurthy {\em et al.}~\cite{KSZC03} and Thorup and Zhang~\cite{thorup12kwise}
explain in more detail how high speed hashing is
necessary for their Internet traffic analysis.  Incidentally, the hash
function we use from \cite{AKKRT20} is a bit faster than the ones from
\cite{KSZC03,thorup12kwise}, which do not provide Chernoff-style
concentration bounds.

The idea is generic and can be applied to other algorithms. We
will also apply it to Broder's original min-hash
algorithm  \cite{Broder97onthe} to estimate set similarity, which
can now be implemented efficiently, giving the desired estimates with high probability.

\paragraph{Concentration}
Let us now be more specific about the algorithmic context. We have
a key universe $U$, e.g., 64-bit keys, and a random hash function $h$ mapping
$U$ uniformly into $R=(0,1]$.

For some input set $S$ and some fraction $p\in [0,1)$, we want to know
  the number $X$ of keys from $S$ that hash below $p$. Here $p$ could be
  an unknown function of $S$, but $p$ should be independent of
  the random hash function $h$. Then the mean $\mu$ is $\E X =  |S|p$.

If the hash function $h$ is fully random, we get the 
classic Chernoff bounds on $X$ (see, e.g, \cite{motwani95book}):

\begin{align}\label{eq:classic-chernoff+}
\PR{X\ge (1+\deltaerr)\mu}& \le \exp(-\deltaerr^2\mu/3)\textnormal{ for }0\leq \deltaerr\leq 1,\\
\label{eq:classic-chernoff-}
\PR{X\le (1-\deltaerr)\mu}& \le \exp(-\deltaerr^2\mu/2)\textnormal{ for }0\leq \deltaerr\leq 1.
\end{align}
Unfortunately, we cannot implement fully random hash functions as
it requires space as big as the universe.

To get something implementable in practice, Wegman and Carter
\cite{wegman81kwise} proposed strongly universal hashing. The random
hash function $h:U\rightarrow R$ is {\em strongly universal} if for
any given distinct keys $x,y\in U$, $(h(x),h(y))$ is uniform in $R^2$.
The standard implementation of a strongly universal hash function into
$[0,1)$ is to pick large prime $\wp$ and two uniformly random numbers $a,b\in
\Z_\wp$. Then $h_{a,b}(x)=((ax+b)\bmod\wp)/\wp$ is strongly universal
from $U\subseteq \Z_\wp$ to $R=\{i/\wp|i\in Z_\wp\}\subset[0,1)$. Obviously
it is not uniform in $[0,1)$, but for any $p\in [0,1)$, we have 
$\PR{h(x)<p}\approx p$ with equality if $p\in R$. Below we ignore
this deviation from uniformity in $[0,1)$.
      
Assuming we have a strongly universal hash function $h:U\rightarrow
[0,1)$, we again let $X$ be the number of elements from $S$ that hash
below $p$.  Then $\mu=\E{X}=|S|p$ and because the hash values are
2-independent, we have $\Var{X}\leq \E{X}=\mu$. Therefore, by
Chebyshev's inequality,
\[\PR{|X-\mu|\geq \eps\mu}<1/(\eps^2\mu).\]
As $\eps^2\mu$ gets large, we see that the concentration we get with
strongly universal hashing is much weaker than the Chernoff bounds with 
fully random hashing. However, Chebyshev is fine if we just
aim at a constant error probability like $1/4$, and then we can use
the median over independent repetitions to reduce the error probability.

In this paper we discuss benefits of having hash functions with strong concentration
akin to that of fully random hashing:
\begin{definition}\label{def:strong-concentration}
A hash function $h:U\rightarrow[0,1)$ is {\em strongly concentrated
    with added error probability $\cE$\/} if for any set
  $S\subseteq U$ and $p\in [0,1)$, if $X$ is the number of elements from
    $S$ hashing below $p$, $\mu=p|S|$ and $\eps\leq 1$, then
\[\PR{|X-\mu|\geq \eps\mu} =2\exp(-\Omega(\eps^2\mu))+\cE.\]
If $\cE=0$, we simply say that $h$ is {\em strongly concentrated}.
\end{definition}
Another way of viewing the added error probability $\cE$ is as follows. We have strong concentration as long as we do not aim for error probabilities
below $\cE$, so if $\cE$ is sufficiently low, we can simply ignore it.

What makes this definition interesting in practice is that Aamand {\em et
al.}~\cite{AKKRT20} recently presented a fast practical small
constant time hash function that for $U=[u]=\{0,\ldots,u-1\}$ is
strongly concentrated with added error probability $u^{-\gamma}$ for any
constant $\gamma$. This term is so small that we can ignore it
in all our applications.  The speed is obtained using certain character
tables in cache that we will discuss later.

Next we consider our two streaming applications, distinct elements and
set-similarity, showing how strongly concentrated hashing eliminates
the need for time consuming independent repetitions. We stress that in
streaming algorithms on high volume data streams, speed is of critical
importance. If the data is not processed quickly, the information is
lost.

Distinct elements is the simplest case, and here we will also discuss
the ramifications of employing the strongly concentrated hashing of
Aamand {\em et al.}~\cite{AKKRT20} as well as possible alternatives.

\section{Counting distinct elements in a data stream}
We consider a sequence of keys $x_1,\ldots,x_s \in [u]$ where each
element may appear multiple times. Using only little space, we wish to
estimate the number $n$ of distinct keys. We are given parameters
$\eps$ and $\delta$, and the goal is to create an estimator, $\hat n$, such that $(1-\eps) n \leq \hat n \leq  (1+\eps) n$ with probability at least $1-\delta$.

Following the classic approach of Bar-Yossef {\em et al.}~\cite{BJKST02}, we
use a strongly universal hash function $h:U\to (0,1]$. For simplicity,
  we assume $h$ to be collision free over $U$.

  For some $k>1$, we maintain the $k$ smallest distinct hash
  values of the stream. We assume for simplicity that $k\leq n$. The space required is thus $O(k)$, so we want $k$ to be small.  Let $x_{(k)}$ be the key having the $k$'th smallest hash value under $h$ and let $h_{(k)}=h(x_{(k)})$. As in~\cite{BJKST02}, we use $\hat n=k/h_{(k)}$ as an estimator for
  $n$ (we note that \cite{BJKST02} suggests several other estimators,
  but the points we will make below apply to all of them).

The point in using a hash function $h$ is that all occurrences of a given
key $x$ in the stream get the same hash value, so if
$S$ is the set of distinct keys, $h_{(k)}$ is
just the $k$ smallest hash value from $S$. In particular, $\hat n$ depends only on $S$, not on the frequencies of the elements of the stream.
Assuming no
collisions, we will often identify the elements with
the hash values, so $x_i$ is smaller than $x_j$ if
$h(x_i)\leq h(x_j)$.

We would like $1/h_{(k)}$ to be concentrated
around $n/k$. For any probability $p\in[0,1]$, let
$X^{<p}$ denote the number of elements from $S$ that hash below $p$.
Let $p_-=k/((1+\eps)n)$ and $p_+=k/((1-\eps)n)$. Note that both $p_-$ and
$p_+$ are independent of the random hash function $h$. Now 
\begin{align*}
1/h_{(k)}&\leq (1-\eps)n/k\iff X^{<p_+}< k=(1-\eps)\E{X^{<p_+}}\\
1/h_{(k)}&> (1+\eps)n/k\iff X^{<p_-}\geq k=(1+\eps)\E{X^{<p_-}},
\end{align*}
and these observations form a good starting point for applying probabilistic tail bounds as we now describe.
\subsection{Strong universality and independent repetitions}
Since $h$ is strongly universal, the hash values of any two keys are
independent, so for any $p$, we have $\Var{X^{<p}}\leq \E{X^{<p}}$, and so
by Chebyshev's inequality,
 \begin{align*}
\PR{1/h_{(k)}\leq (1-\eps)n/k}&<(1-\eps)/(k\eps^2)\\
\PR{1/h_{(k)}>(1+\eps)n/k}&\leq(1+\eps)/(k\eps^2).
\end{align*}
Assuming $\eps\leq 1$, we thus get that
\[\PR{|\hat n -n|> \eps n}=\PR{\left|1/h_{(k)}-n/k\right| > \eps n/k}\leq 2/(k\eps^2).\]
To get the desired error probability $\delta$, we could now set
$k=2/(\delta\eps^2)$, but if $\delta$ is small, e.g. $\delta=1/u$, $k$ becomes way too large. As in~\cite{BJKST02} we instead start by aiming for a constant error probability, $\delta_0$, say $\delta_0=1/4$. For this value of $\delta_0$, it suffices to set $k_0=8/\eps^2$. We now run $r$ (to be
determined) independent experiments with this value of $k_0$, obtaining independent estimators for $n$, $\hat n_1,\ldots,\hat n_r$. Finally, as our final estimator, $\hat n$, we return the median of $\hat n_1,\dots,\hat n_r$. Now for each $1\leq i \leq r$, $\Pr[|\hat n_i-n|> \eps n]\leq 1/4$ and these events are independent. If $|\hat n-n|\geq \eps n$, then $|\hat n_i-n|\geq \eps n$ for at least half of the $1\leq i \leq r$. By the standard Chernoff bound~\req{eq:classic-chernoff+}, this probability can be bounded by 
\[\PR{|\hat n -n|> \eps n}
\leq \exp(-(r/4)/3)=\exp(-r/12).\] 
Setting $r=12\ln(1/\delta)$, we get
the desired error probability $1/\delta$. The total number of hash
values stored is
$k_0r=(8/\eps^2)(12\ln(\delta))=96\ln(1/\delta)/\eps^2$.

\subsection{A better world with fully random hashing}
Suppose instead that $h:[u]\to (0,1]$ is a fully random hash function.
In this case, the standard Chernoff bounds \req{eq:classic-chernoff+} and
\req{eq:classic-chernoff-} with $\eps\leq 1$ yield
\begin{align*}
\PR{1/h_{(k)}\leq (1-\eps)n/k}&<\exp(-(k/(1-\eps))\eps^2/2)\\
\PR{1/h_{(k)}>(1+\eps)n/k}&\leq\exp(-(k/(1+\eps))\eps^2/3).
\end{align*}
Hence
\begin{equation}\label{eq:distinct-Chernoff}
\PR{|\hat n -n|> \eps n}=\PR{|1/h_{(k)}-n/k|\geq \eps n/k}
\leq 2\exp(-k\eps^2/6).
\end{equation}
Thus, to get error probability $\delta$, we just use $k=6\ln(2/\delta)/\eps^2$.
There are several reasons why this is much better than the above approach
using 2-independence and independent repetitions.
\begin{itemize}
\item It avoids the independent repetitions, so instead of
applying $r=\Theta(\log(1/\delta))$ hash functions to each key we just need one. We thus save a factor of $\Theta(\log(1/\delta))$ in speed. 
\item Overall we store fewer hash values: 
$k=6\ln(2/\delta)/\eps^2$
instead of $96\ln(1/\delta)/\eps^2$.
\item With independent repetitions, we are tuning the algorithm depending on $\eps$ and
$\delta$, whereas with a fully-random hash function, we get the
concentration from \req{eq:distinct-Chernoff} for every $\eps\leq 1$.
\end{itemize}
The only caveat is that fully-random hash functions cannot be implemented.
\subsection{Using hashing with strong concentration bounds}\label{sec:distinct-strong}
We now discuss the effect of relaxing the abstract full-random hashing
to hashing with strong concentration bounds and added error probability $\cE$. Then for $\eps\leq 1$,
\begin{align*}
\PR{1/h_{(k)}\leq (1-\eps)n/k}&=2\exp(-\Omega(k/(1-\eps))\eps^2)+\cE\\
\PR{1/h_{(k)}>(1+\eps)n/k}&=2\exp(-\Omega(k/(1+\eps))\eps^2)+\cE.
\end{align*}
so 
\begin{equation}\label{eq:distinct-strong}
\PR{|\hat n-n| \geq \eps n}=\PR{|1/h_{(k)}-n/k| \geq \eps n/k}
\leq 2\exp(-\Omega(k\eps^2))+O(\cE).
\end{equation}
To obtain the error probability $\delta=\omega(\cE)$, 
we again need to store $k=O(\log(1/\delta)/\eps^2)$ hash values.
Within a constant factor this means that we use the same total number
using 2-independence and independent repetitions, and we still
retain the following advantages from the fully random case.
\begin{itemize}
\item With no independent repetitions we avoid
applying $r=\Theta(\log(1/\delta))$ hash functions to each key,
so we basically save a factor $\Theta(\log(1/\delta))$ in speed. 
\item With independent repetitions, we only address a given $\eps\leq 1$ and
$\delta$, while with a fully-random hash function we get the
concentration from \req{eq:distinct-Chernoff} for every $\eps\leq 1$.
\end{itemize}

\subsection{Implementation and alternatives}
We briefly discuss how to maintain the $k$ smallest elements/hash values.  The
most obvious method is using a priority queue, but this takes
$O(\log k)$ time per element, dominating the cost of evaluating the hash
function. However, we can get down to constant time per element if we
have a buffer for $k$. When the buffer gets full, we find the median in linear time with (randomized) selection and discard the bigger elements. This is standard to de-amortize if needed. 

A different, and more efficient, sketch from~\cite{BJKST02} identifies
the smallest $b$ such that the number $X^{<1/2^b}$ of keys hashing
below $1/2^b$ is at most $k$. For the online processing of the stream,
this means that we increment $b$ whenever $X^{<1/2^b}>k$. At the end,
we return $2^b X^{<1/2^b}$.  The analysis of this alternative sketch
is similar to the one above, and we get the same advantage of avoiding
independent repetitions using strongly concentrated hashing, that is,
for error probability $\delta$, in~\cite{BJKST02}, they run $O(\log (1/\delta))$
independent experiments with independent hash functions,
each storing up to $k=O(1/\eps^2)$ hash values, whereas we run only a
single experiment with a single strongly concentrated hash function
storing $k=O(\log (1/\delta)/\eps^2)$ hash values. The total number of
hash values stored is the same, but asymptotically,
we save a factor $\log (1/\delta)$ in time.
  
\paragraph{Other alternatives}

Estimating the number of distinct elements in a stream began with the
work of Flajolet and Martin \cite{Flajolet85counting} and has
continued with a long line of research \cite{alon96frequency, BJKST02,
  BKS02, BHRSG07, BC09, cohen97minwise, DF03, EVF06,
  Flajolet85counting,FlajoletFG07, Gibbons01, GibbonsT01, IndykW03,
  Woodruff04}. In particular, there has been a lot of focus on
minimizing the sketch size.  Theoretically speaking, the problem
finally found an asymptotically optimal, both in time and in space,
solution by Kane, Nelson and Woodruff \cite{KNW10:stream}, assuming we
only need $\frac{2}{3}$ probability of success. The optimal space,
including that of the hash function, is $O(\eps^{-2}+\log n)$ bits,
improving the $O(\eps^{-2} \cdot \log n)$ bits needed by Bar-Yossef {\em et
al.}~\cite{BJKST02} to store $O(\eps^{-2})$ hash values. Both
\cite{BJKST02} and \cite{KNW10:stream}, suggest using $O(\log
(1/\delta))$ independent repetitions to reduce the error probability to
$1/\delta$, but then both time and space blow up by a factor $O(\log
(1/\delta))$.

Recently Blasiok \cite{Bla18:stream} found a space optimal algorithm
for the case of small error probability $1/\delta$. In this
case, the bound from \cite{KNW10:stream} with independent
repetitions was $O(\log (1/\delta)(\eps^{-2}+\log n))$
which he reduces to $O(\log (1/\delta)\eps^{-2}+\log n)$, again including
the space for hash functions. He no longer has $O(\log (1/\delta))$ hash
functions, but this only helps his space, not his processing time, which
he states as polynomial in $\log (1/\delta)$ and $\log n$.

The above space optimal algorithms \cite{Bla18:stream,KNW10:stream}
are very interesting, but fairly complicated, seemingly involving some
quite large constants. However, here our focus is to get a fast practical
algorithm to handle a high volume data stream online, not worrying as
much about space. Assuming fast strongly concentrated hashing, it is
then much better to use our implementation of the simple algorithm of
Bar-Yossef {\em et al.}~\cite{BJKST02} using $k=O(\eps^{-2}\log
(1/\delta))$.

\subsection{Implementing Hashing with Strong Concentration}
As mentioned earlier, Aamand {\em et al.}~\cite{AKKRT20} recently presented
a fast practical small constant time hash function,
Tabulation-1Permutation, that for $U=[u]=\{0,\ldots,u-1\}$ is strongly
concentrated with additive error $u^{-\gamma}$ for any constant
$\gamma$. The scheme obtains its power and speed using certain
character tables in cache.

More specifically, we view keys as consisting of a small number $c$ of
characters from some alphabet $\Sigma$, that is, $U=\Sigma^c$. For
$64$-bit keys, this could be $c=8$ characters of $8$ bits each. Let's say
that hash values are also from $U$, but viewed as bit strings representing
fractions in $[0,1)$.

Tabulation-1Permutation needs $c+1$ character tables mapping
characters to hash values. To compute the hash value of a key, we need
to look up $c+1$ characters in these tables. In addition we need
$O(c)$ fast AC$^0$ operations to extract the characters and xor the
hash values. The character tables can be populated with an $O(\log n)$
independent pseudo-random number generator, needing a random seed of
$O((\log n)(\log u))$ bits.

\paragraph{Computer dependent versus problem dependent view of resources for hashing}
We view the resources used for Tabulation-1Permutation as computer
dependent rather than problem dependent. When you buy a new computer
you can decide how much cache you want to allocate for your hash functions.
In the experiments performed in~\cite{AKKRT20}, using 8-bit characters
and $c=8$ for 64-bit keys was very efficient. On two computers, it was
found that tabulation-1permutation was less than 3 times slower than
the fastest known strongly universal hashing scheme; namely Dietzfelbinger's
\cite{dietzfel96universal} which does just one multiplication and one shift.
Also, Tabulation-1Permutation was more than 50 times faster than the
fastest known highly independent hashing scheme; namely Thorup's
\cite{Tho13:simple-simple} double tabulation scheme which, in theory
also works in constant time.

In total, the space used by all the character tables is $9\times 2^8\times 64$
bits which is less than 20 KB, which indeed fits in very fast cache. We note
that when we have first populated the tables with hash values, they are not
overwritten. This means that the cache does not get dirty, that is
different computer cores can access the tables and not worry about
consistency.

This is different than the work space used to maintain the sketch
of the number of distinct keys represented via $k=O(\eps^{-2}\log (1/\delta))$ 
hash values, but let's compare anyway with real numbers. Even with
a fully random hash function with perfect Chernoff bounds, we needed
$k=6\ln(2/\delta)/\eps^2$,
so with, say, $\delta=1/2^{30}$ and $\eps=1\%$, we
get $k> 2^{20}$, which is much more than the $9\times 2^8$ hash values
stored in the character tables for the hash functions. Of course, we would be
happy with a much smaller $k$ so that everything is small and fits in
fast cache.

We note that any $k>|\Sigma|=2^8$ rules out the concentration
of previous tabulation schemes such a simple tabulation \cite{patrascu12charhash}
and twisted tabulation \cite{PT13:twist}. The reader is referred
to~\cite{AKKRT20} for a thorough discussion of the alternatives.

Finally, we relate our strong concentration from Definition
\ref{def:strong-concentration} to the exact concentration result
from~\cite{AKKRT20}:
\begin{theorem}\label{thm:tab-1perm}
     Let $h\colon [u]\to [r]$ be a tabulation-1permutation hash
     function with $[u]=\Sigma^c$ and $[r]=\Sigma^d$, $c,d=O(1)$. Consider
a key/ball set $S\subseteq [u]$ of size $n=\abs{S}$ where each ball $x\in S$ 
is assigned a weight $w_x \in [0,1]$.  Choose arbitrary hash values $y_1, y_2\in [r]$ with
$y_1\leq y_2$.  Define $X=\sum_{x\in
S}w_x\cdot [y_1\leq h(x)< y_2]$ to be the total weight of balls hashing to
the
interval $[y_1, y_2)$. Write $\mu = \E{X}$ and $\sigma^2=\Var{X}$. Then for any constant
$\gamma$ and every $t>0$, 
\begin{equation}\label{eq:concentration}
\Pr[|X-\mu|\geq  t] \le 2\exp(-\Omega(\sigma^2 \; \cC(t/\sigma^2)))+1/u^\gamma.
\end{equation}
Here $\mathcal{C}: (-1,\infty) \to [0,\infty)$ is given by
  $\cC(x)=(x+1)\ln (x+1)-x$, so $\exp(-\C(x))=
  \frac{e^x}{(1+x)^{(1+x)}}$. The above also holds if we condition the
  random hash function $h$ on a distinguished query key $q$ having a
  specific hash value.
\end{theorem}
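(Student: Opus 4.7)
The plan is to invoke the concentration inequality of Aamand {\em et al.}~\cite{AKKRT20} essentially verbatim, since Theorem~\ref{thm:tab-1perm} is a restatement of their main concentration bound for tabulation-1permutation in our notation. Accordingly, my actual ``proof'' would be a short reduction: cite their theorem for the displayed inequality~\eqref{eq:concentration}, and note that the conditioning on a distinguished query key is also part of their statement. The only work left in our write-up is to record how this specializes to Definition~\ref{def:strong-concentration}: set all weights to $1$, take $y_1 = 0$ and $y_2 = p$, plug in $t = \eps\mu$, and use the crude bound $\sigma^2 \le \mu$ together with $\cC(\eps) = \Omega(\eps^2)$ for $\eps\le 1$ to conclude that $\sigma^2\,\cC(t/\sigma^2) = \Omega(\eps^2\mu)$, matching the form in Definition~\ref{def:strong-concentration}.

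If the bound had to be proved from scratch, I would exploit the algebraic structure of tabulation-1permutation: a key $x=(x_1,\ldots,x_c)\in\Sigma^c$ is hashed by applying a random permutation $\pi:\Sigma\to\Sigma$ to a distinguished character and then XORing independently tabulated character hashes. The permutation step is the crucial extra randomness that plain simple tabulation lacks, and it is what enables Chernoff-style concentration. I would set up a moment/martingale argument that reveals the random tables and $\pi$ in a carefully chosen order, bounding $\E{(X-\mu)^p}$ in terms of $\sigma^2$ and $p$. The rate function $\cC$ would then emerge from the standard Bennett-style optimization of $e^{-\lambda t}\,\E{e^{\lambda(X-\mu)}}$ over $\lambda>0$, given suitable control of the moment generating function.

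Handling weights $w_x\in[0,1]$ and a general interval $[y_1,y_2)$ on top of an unweighted version is routine: each summand $w_x\cdot\indicator{y_1\le h(x)<y_2}$ is a bounded random variable determined solely by $h(x)$, so the Bennett-type bounds carry over with the stated variance $\sigma^2=\Var{X}$. Conditioning on a distinguished query key $q$ with a specific hash value is handled by the standard query trick for tabulation schemes: one absorbs $q$'s character contributions into the random tables, leaving a hash function of the same form acting on $S\setminus\{q\}$ and applies the unconditional bound there.

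The hard part, and the reason we outsource it to~\cite{AKKRT20}, is the moment bound above. Simple tabulation alone does not yield Chernoff-style concentration, and proving that adding a single permutation step suffices requires a delicate combinatorial analysis of dependency patterns among characters of the key set, together with a careful argument that $\pi$ decorrelates enough ``position classes'' of $\Sigma^c$ to mimic independence at the level of high moments. That analysis, together with the role of the pseudorandom seeding of the tables in producing the additive $1/u^\gamma$ slack, is the technical core of~\cite{AKKRT20} and would be the main obstacle in any self-contained reproof.
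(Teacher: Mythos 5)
Your proposal matches the paper: Theorem~\ref{thm:tab-1perm} is stated there without proof as a direct citation of the concentration bound for tabulation-1permutation from~\cite{AKKRT20}, and the specialization you describe (unit weights, $y_1=0$, $t=\eps\mu$, the crude bound $\sigma^2\le\mu$, and $\cC(\eps)=\Omega(\eps^2)$ for $\eps\le 1$) is exactly the reduction the paper performs immediately afterward to recover Definition~\ref{def:strong-concentration}. Your sketch of how a self-contained proof might go is extra background, and you correctly identify that the technical core is outsourced to~\cite{AKKRT20}.
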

The above statement is far more general than what we need. All our
weights are unit weights. We fix $r=u$ and $y_1=0$. Viewing hash
values as fractions in $[0,1)$, the random variable $X$ is the number
  of items hashing below $p=y_2/u$.  Also, since $\Var{X}\leq \E{X}$,
  \req{eq:concentration} implies the same statement with $\mu$ instead of
  $\sigma^2$. Moreover, our $\eps\leq 1$ corresponds to $t=\eps\mu\leq
  \mu$, and then we get
\[\Pr[|X-\mu|\geq  \eps\mu] \le 2\exp(-\Omega(\mu\; \cC(\eps)))+1/u^\gamma \leq
2\exp(-\Omega(\mu\eps^2))+1/u^\gamma.\]
which is exactly as in our Definition \ref{def:strong-concentration}. Only
remaining difference is that Definition \ref{def:strong-concentration}
should work for {\em any} $p\in [0,1)$ while the bound we get only
works for $p$ that are multiples of $1/u$. However, this suffices
by the following general lemma:
\begin{lemma} Suppose we have a hash function $h:[u]\rightarrow[0,1)$
    such that for any set $S\subseteq U$ and for any $p\in [0,1)$ that is a
    multiple of $1/u$, for the number $X^{<p}$ of elements from $S$
    that hash below $p$, with $\mu_{p}=p|S|$ and $\eps\leq 1$, it holds that
\[\PR{|X^{<p}-\mu_p|\geq \eps\mu_p} \le 2\exp(-\Omega(\eps^2\mu_p))+O(\cE).\]
Then the same statement holds for all $p\in[0,1)$
\end{lemma}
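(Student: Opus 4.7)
The plan is to reduce an arbitrary $p \in [0,1)$ to the nearest multiple of $1/u$ from above, $p^+ := \lceil pu\rceil/u$. Since $h$ takes values on the grid $\{0, 1/u, \ldots, (u-1)/u\}$, the indicator $[h(x)<p]$ agrees with $[h(x)<p^+]$, so $X^{<p} = X^{<p^+}$ exactly, while the means differ by $\Delta := \mu_{p^+} - \mu_p = (p^+-p)|S| \in [0, |S|/u)$. From here I would split on whether $\Delta$ is small or large relative to $\eps\mu_p$.

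\emph{Case A: $\Delta \leq \eps\mu_p/2$.} Then $\mu_{p^+} \leq (1+\eps/2)\mu_p \leq (3/2)\mu_p$, and the event $|X^{<p}-\mu_p|\geq \eps\mu_p$ forces $|X^{<p^+}-\mu_{p^+}| \geq \eps\mu_p - \Delta \geq \eps\mu_p/2 \geq \eps\mu_{p^+}/3$. Applying the hypothesis at the multiple-of-$1/u$ value $p^+$ with parameter $\eps/3 \leq 1$ bounds this by $2\exp(-\Omega(\eps^2\mu_{p^+}))+O(\cE)$, which is at most $2\exp(-\Omega(\eps^2\mu_p))+O(\cE)$ since $\mu_{p^+}\geq\mu_p$.

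\emph{Case B: $\Delta > \eps\mu_p/2$.} The key observation is that $|S|\leq u$ because $S\subseteq[u]$. Hence $\eps\mu_p/2 < \Delta < |S|/u \leq 1$, giving $\eps^2\mu_p \leq 2\eps \leq 2$. Choosing the constant in the target $\Omega$-notation small enough (say at most $(\ln 2)/2$) makes $2\exp(-\Omega(\eps^2\mu_p)) \geq 1$, so the claim is vacuous.

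The most delicate point is Case B: when $\mu_p$ is so small that rounding $p$ up to $p^+$ perturbs the mean by a non-negligible fraction, one cannot route the bound through the hypothesis and must instead argue that the target bound is itself trivial. The constraint $|S|\leq u$ is precisely what makes the rounding obstruction and the triviality of the bound coincide. As a bonus, the boundary case $p \in [1-1/u, 1)$, where $p^+ = 1$ escapes the hypothesis's domain, automatically lands in Case B and requires no separate treatment.
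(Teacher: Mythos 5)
Your proof hinges on the identity $X^{<p} = X^{<p^+}$, which you justify by asserting that $h$ takes values on the grid $\{0, 1/u, \ldots, (u-1)/u\}$. That grid assumption is not among the lemma's hypotheses---the lemma only says $h\colon [u]\to[0,1)$---and the paper's proof never invokes it. The paper instead brackets $p$ between $p_-=(i-1)/u$ and $p_+=i/u$ and uses the one-sided monotonicity $X^{<p_-}\le X^{<p}\le X^{<p_+}$, routing the lower tail through $p_-$ and the upper tail through $p_+$, after first discarding the regime $\eps^2\mu_p=O(1)$ as trivial. Your argument is cleaner precisely because the grid assumption lets both tails be handled at $p^+$ alone, but as written it does not prove the lemma in the generality in which it is stated. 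In fairness, the lemma's only application in the paper is to Tabulation-1Permutation, which \emph{is} grid-valued, so your proof would suffice for the purpose the lemma is put to; it is just not a proof of the statement actually displayed.

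Two smaller points. First, your claim that $p\in(1-1/u,1)$, where $p^+=1$, "automatically lands in Case B" is false: for $p$ very close to $1$ one has $\Delta=(1-p)|S|$ arbitrarily small while $\mu_p\approx|S|$, so $\Delta\le\eps\mu_p/2$ is perfectly possible and you are in Case A. What saves you there---again only under the grid assumption---is that $X^{<p}=X^{<1}=|S|$ is deterministic, so $|X^{<p}-\mu_p|=\Delta\le\eps\mu_p/2<\eps\mu_p$ and the bad event has probability zero; the case needs its own (one-line) treatment rather than none. Second, the strict inequality $\Delta<|S|/u$ you use in Case B requires $p$ not to be a multiple of $1/u$; when it is, $p^+=p$, $\Delta=0$, and there is nothing to prove, so this is harmless but worth noting. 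Beyond these issues, the Case A/Case B split and the observation that the bound is vacuous when $\eps^2\mu_p$ is bounded mirror the spirit of the paper's opening reduction, just packaged differently.
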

\begin{proof} First we note that the statement is trivially true if $\eps^2\mu_p=O(1)$, so we can assume $\eps^2\mu_p=\omega(1)$. Since $\eps\leq 1$,
we also have $\mu_p=\omega(1)$.

We are given an arbitrary $p\in [0,1)$. Let  $p_{+}=i/u$ be the nearest higher multiple    of $1/u$. Since $|S|\leq u$ and $\mu_p=p|S|$ we have $i\geq\mu_p$, implying $i=\omega(1)$. 
We also let $p_{-}=(i-1)/u$.

It is now clear that since $p_{-}<p\le p_{+}$, it holds that $X^{<p_{-}}\le X^{<p}\le X^{<p_{+}}$. We first show that
\[X^{<p}\le (1-\eps)\mu_p \implies X^{<p_{-}}\le (1-\eps/2)\mu_{p_{-}}.\]

Indeed, $X^{<p}\le (1-\eps)\mu_p$ implies $X^{<p_{-}}\le (1-\eps)p|S| \le (1-\eps)(p_{-}+1/u)|S| = \mu_{p_{-}} - \eps \mu_{p_{-}} + (1-\eps)|S|/u$.

But $|S|\le u$ and $(1-\eps)<1$, so $X^{<p_{-}} \le \mu_{p_{-}} - \eps \mu_{p_{-}} + 1 \le (1-\eps/2)\mu_{p_{-}}$. The last follows from the fact that $(\eps/2) \mu_{p_{-}} \ge (\eps/2) \mu_p - (\eps/2) |S|/u \ge (\eps^2 /2) \mu_p - 1$, but $\eps^2 \mu_p = \omega(1)$ and so $(\eps/2) \mu_{p_{-}} = \omega(1)$.

The exact same reasoning gives 

\[X^{<p}\ge (1+\eps)\mu_p \implies X^{<p_{+}}\ge (1+\eps/2)\mu_{p_{+}}.\]

But then 
\[\PR{|X^{<p}-\mu_p|\geq \eps\mu_p} = 
\PR{X^{<p}\le (1-\eps)\mu_p} + \PR{X^{<p}\ge (1+\eps)\mu_p} \le \]
\[ \PR{X^{<p_{-}}\le (1-\eps/2)\mu_{p_{-}}} + \PR{X^{<p_{+}}\ge (1+\eps/2)\mu_{p_{+}}} \le \]
\[ \PR{|X^{<p_{-}}-\mu_{p_{-}}|\ge (\eps/2) \mu_{p_{-}}} + \PR{|X^{<p_{+}}-\mu_{p_{+}}| 	\ge (\eps/2) \mu_{p_{+}}} \le \]

Notice that $\mu_p-1 \le \mu_{p_{-}} \le \mu_{p_{+}}$, and $p_{-}$ and $p_{+}$ are
multiples of $1/u$, so we can use the bounds of the statement. Thus 
$\PR{|X^{<p}-\mu_p|\geq \eps\mu_p}$ is upper bounded by
\[4\exp(-\Omega((\eps/2)^2(\mu_{p}-1)))+O(\cE) = 2\exp(-\Omega(\eps^2\mu_{p}))+O(\cE)\]
\end{proof}
We note that~\cite{AKKRT20} also presents a slightly slower scheme, Tabulation-Permutation, which offers far more general concentration bounds than
those for Tabulation-1Permutation in Theorem \ref{thm:tab-1perm}. However,
Tabulation-1Permutation is faster and sufficient for the strong concentration
needed for our streaming applications.

\section{Set similarity} 
\newcommand\MIN{\textnormal{MIN}}
We now consider Broder's~\cite{Broder97onthe}
original algorithm for set similarity. As above,
it uses a hash function $h:[u]\to[0,1]$ which we
assume to be collision free. The bottom-$k$ sample $\MIN_k(S)$ of a set 
$S\subseteq [u]$ consists of the $k$ elements with the smallest
hash values. If $h$ is fully random then $\MIN_k(S)$ is a uniformly
random subset of $k$ distinct elements from $\MIN_k(S)$. We assume
here that $k\leq n=|S|$.
With $\MIN_k(S)$, we can
estimate the frequency $f=|T|/|S|$ of any subset $T\subseteq S$ as 
$|\MIN_k(S)\cap T|/k$. 

Broder's main application is the estimation of the Jaccard
similarity $f=|A\cap B|/|A\cup B|$ between sets $A$ and $B$. Given the
bottom-$k$ samples from $A$ and $B$, we may construct the bottom-$k$ sample of
their union as $\MIN_k(A\cup B)=\MIN_k(\MIN_k(A)\cup \MIN_k(B))$, and then the
similarity is estimated as $|\MIN_k(A\cup B)\cap \MIN_k(A)\cap \MIN_k(B)|/k$.

We note again the crucial importance of having a common hash
function $h$. In a distributed setting, samples $\MIN_k(A)$ and $\MIN_k(B)$
can be generated by different entities. As long as
they agree on $h$, they only need to communicate the samples to estimate
the Jaccard similarity of $A$ and $B$. As noted before, for Tabulation-1Permutation $h$ can be shared by exchanging a random seed of $O((\log{n})(\log{u}))$ bits.

For the hash function $h$, Broder \cite{Broder97onthe} first considers
fully random hashing. Then $\MIN_k(S)$ is a fully random sample of
$k$ distinct elements from $S$, which is very well understood.

Broder also sketches some alternatives with realistic hash functions,
but Thorup \cite{thorup13bottomk} showed that even if we
just use 2-independence, we get the same expected error as with fully
random hashing, but here we want strong concentration. Our analysis follows the simple
union-bound approach from \cite{thorup13bottomk}.


For the analysis, it is simpler to study the case where
we are sampling from a set $S$ and want to estimate the
frequency $f=|T|/|S|$ of a subset $T\subseteq S$. Let
$h_{(k)}$ be the $k$th smallest hash value from $S$ as
in the above algorithm for estimating distinct elements.
For any $p$ let $Y^{\leq p}$ be the number of elements
from $T$ with hash value at most $p$. Then
$|T\cap \MIN_k(S)|=Y^{\leq h_{(k)}}$ which is our estimator for $fk$.

\begin{theorem}
For $\eps\leq 1$, if $h$ is strongly concentrated with added error probability $\cE$, then 
\begin{equation}\label{eq:sim-small-eps}
\PR{|Y^{\leq h_{(k)}}-fk|>\eps fk}=2\exp(-\Omega(fk\eps^2))+O(\cE).
\end{equation}
\end{theorem}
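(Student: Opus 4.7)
The plan is to sandwich $h_{(k)}$ between two deterministic thresholds, apply strong concentration twice---once to count $S$-elements to control $h_{(k)}$ and once to count $T$-elements at the thresholds---and then union bound. Throughout I may assume $fk\eps^2=\omega(1)$, since otherwise the right-hand side is $\Omega(1)$ and the bound holds trivially.

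Let $n=|S|$ and set the thresholds $p_{-}=(1-\eps/3)k/n$ and $p_{+}=(1+\eps/3)k/n$, both independent of the random hash function $h$. Applying Definition~\ref{def:strong-concentration} to the set $S$, I observe that $\E{X^{<p_{\pm}}}=(1\pm\eps/3)k$, and that the bracketing event $\{p_{-}\le h_{(k)}\le p_{+}\}$ fails exactly when $X^{<p_{-}}\ge k$ or $X^{<p_{+}}<k$, each of which is a relative deviation of order $\eps$ from the mean. Strong concentration thus bounds the failure of bracketing by $4\exp(-\Omega(k\eps^2))+2\cE$.

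On the bracketing event, monotonicity gives $Y^{<p_{-}}\le Y^{\le h_{(k)}}\le Y^{\le p_{+}}$. The respective means are $\E{Y^{<p_{\pm}}}=f(1\pm\eps/3)k$. I therefore want to show that with high probability $Y^{<p_{-}}\ge (1-\eps)fk$ and $Y^{\le p_{+}}\le (1+\eps)fk$; each of these is again a deviation of $Y^{<p_{\pm}}$ from its mean by an absolute amount $(2\eps/3)fk$, which is a relative deviation of order $\eps$ from $\E{Y^{<p_{\pm}}}=\Theta(fk)$. Applying Definition~\ref{def:strong-concentration} to $T$ with $\eps$ replaced by a constant multiple of itself, each of these two events fails with probability at most $2\exp(-\Omega(fk\eps^2))+\cE$.

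A final union bound over the four events---two for the bracketing of $h_{(k)}$ and two for the deviation of $Y^{<p_{\pm}}$ from its mean---yields
\[\PR{|Y^{\le h_{(k)}}-fk|>\eps fk}\;\le\; 4\exp(-\Omega(k\eps^2))+4\exp(-\Omega(fk\eps^2))+O(\cE),\]
which since $f\le 1$ simplifies to the claimed $2\exp(-\Omega(fk\eps^2))+O(\cE)$. The only subtle step is verifying that the $O(\eps)$ slack created by choosing $p_\pm$ at $(1\pm\eps/3)$ rather than at $(1\pm\eps)$ leaves enough room so that on the bracketing event the $Y$-deviation required is still a constant fraction of $\E{Y^{<p_\pm}}$; this is routine since $(1+\eps)fk-f(1+\eps/3)k=(2\eps/3)fk$ is $\Theta(\eps)\cdot\E{Y^{<p_+}}$, and symmetrically for $p_-$. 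No independence between the $X$- and $Y$-events is needed; the union bound alone suffices, which is the main convenience of this approach.
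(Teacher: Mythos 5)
Your proposal is correct and follows essentially the same strategy as the paper's proof: bracket $h_{(k)}$ between two deterministic thresholds near $k/n$ using strong concentration on $S$, then apply strong concentration on $T$ at those thresholds and take a union bound. The only cosmetic difference is your choice of thresholds $p_\pm=(1\pm\eps/3)k/n$ versus the paper's $p_\pm=k/\bigl((1\mp\eps/3)n\bigr)$, which changes nothing substantive.
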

\begin{proof}
Let $n=|S|$. We already saw in \req{eq:distinct-strong} that for any $\eps_S\leq 1$, 
\(P_S=\PR{|1/h_{(k)}-n/k|\geq \eps_S n/k}\leq 2\exp(-\Omega(k\eps_S^2))+O(\cE).
\) 
Thus, with $p_-=k/((1+\eps_S)n)$ and $p_+=k/((1-\eps_S)n)$,
we have $h_{(k)}\in [p_-,\,p_+]$ with probability $1-P_S$, and in
that case, $Y^{\leq p_-}\leq Y^{\leq h_{(k)}}\leq Y^{\leq p_+}$. 

Let $\mu^-=\E{Y^{\leq p_-}}=fk/(1+\eps_S)\geq fk/2$.  By strong concentration,
for any $\eps_T\leq 1$, we get that 
\[P^-_{T}=\PR{Y^{\leq p_-}\leq (1-\eps_T)\mu_-}\leq 2\exp(-\Omega(\mu_-\eps_T^2))+\cE
= 2\exp(-\Omega(fk\eps_T^2))+\cE.\]
Thus
\[\PR{Y^{\leq h_{(k)}}\leq \frac{1-\eps_T}{1+\eps_S}fk}\leq P^-_T+P_S.\]
Likewise, with $\mu^+=\E{Y^{\leq p_+}}=fk/(1-\eps_S)$, for any $\eps_T$,
we get that
\begin{align*}
P^+_T=\PR{Y^{\leq p_+}\geq (1+\eps_T)\mu_+}
\leq 2\exp(-\Omega(\mu_+\eps_T^2))+\cE
= 2\exp(-\Omega(fk\eps_T^2))+\cE,
\end{align*}
and
\begin{align*}
\PR{Y^{\leq h_{(k)}}\geq \frac{1+\eps_T}{1-\eps_S}fk}\leq P^+_T+P_S.
\end{align*}
To prove the theorem for $\eps\leq 1$, we set $\eps_S=\eps_T=\eps/3$. Then
$\frac{1+\eps_T}{1-\eps_S}\leq 1+\eps$ and 
$\frac{1-\eps_T}{1+\eps_S}\geq 1-\eps$. Therefore
\[\PR{|Y^{\leq h_{(k)}}-fk|\geq \eps fk}\leq P^-_T+P^+_T+2P_S\leq
8\exp(-\Omega(fk\eps_T^2))+O(\cE)= 2\exp(-\Omega(fk\eps_T^2))+O(\cE).\]
This completes the proof
of \req{eq:sim-small-eps}.
\end{proof}
As for the problem of counting distinct elements in a stream, in the online setting we may again modify the algorithm above to obtain a more efficient sketch. Assuming that the elements from $S$ appear in a stream, we again identify the smallest $b$ such that the number of keys from $S$ hashing below $1/2^b$, $X^{\leq 1/2^b}$, is at most $k$. We increment $b$ by one whenever $X^{\leq 1/2^b}>k$ and in the end we return $Y^{\leq 1/2^b}/X^{\leq 1/2^b}$ as an estimator for $f$. The analysis of this modified algorithm is similar to the analysis provided above.

\clearpage
\section*{Acknowledgements}
\begin{wrapfigure}{l}{0.055\textwidth}
\vspace{-4 mm}
\includegraphics[height=0.055\textwidth, width=0.055\textwidth]{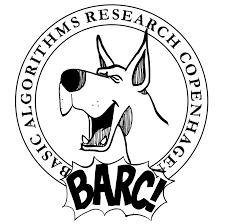} 
\includegraphics[height=0.03\textwidth, width=0.055\textwidth]{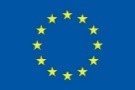} 
\end{wrapfigure}
Research of all authors partly 
supported by Thorup's Investigator Grant 16582, Basic Algorithms Research Copenhagen (BARC), from the VILLUM Foundation.
Evangelos Kipouridis has also received funding
from the European Union's Horizon 2020 research and innovation
program under the Marie Skłodowska-Curie grant agreement No 801199.

	\bibliographystyle{acm}
	\bibliography{bib}
\end{document}